\numberwithin{equation}{section}  \makeatletter\@addtoreset{equation}{section}
\newcommand{\norm}[1]{\left\Vert#1\right\Vert} 
\newtheorem{theorem}{Theorem}[section]%[chapter]
\newtheorem{proposition}[theorem]{Proposition}
\newtheorem{lemma}[theorem]{Lemma}
\newtheorem{remark}[theorem]{Remark}
\newcommand{\C}{\mathbb{C}}
\newcommand{\bz}{\overline{z}}	
\definecolor{mycolor}{rgb}{0.122, 0.835, 0.998}
\newmdenv[innerlinewidth=0.5pt, roundcorner=4pt,linecolor=mycolor,innerleftmargin=6pt,
innerrightmargin=6pt,innertopmargin=6pt,innerbottommargin=6pt]{mybox}
\begin{document}
	%%%%%%%%%%%%%%%%%%%%%%%%%%%%%%%%%%%%%%%%%%%%%%%%%%%%%%%%%%%%%%%%%%%%%%%%%%%%%%%%
	
	\title[]{Integral representation for Jacobi polynomials and application to heat kernel on quantized sphere}
	
		\author{Ali Hafoud}
	\address{(A.H.) Centre Régional des Métiers de l'Education et de la Formation\newline de kenitra, Morocco} 
	\email{hafoudaliali@gmail.com}
	
		\author{Allal Ghanmi}   
	\address{(G.A.) Analysis, P.D.E. $\&$ Spectral Geometry, Lab MIA-SI, CeReMAR
		\newline Department of Mathematics, Faculty of Sciences, P.O. Box 1014
		\newline Mohammed V University in Rabat, Morocco}
	\email{allalghanmi@um5.ac.ma}

	\date{\today}
	\maketitle
	%%%%%%%%%%%%%%%%%%%%%%%%%%%%%%%%%%%%%%%%%%%%%%%%%%%%%%%%%%%%%%%%%%%%%%%%%%%%%%%%%%%%%%%%%%%%%%%
	
	\begin{abstract}
	We derive a novel integral representations of Jacobi polynomials in terms of the Gauss hypergeometric function. Such representation is then used to give the explicit integral representation for the Heat kernel on the quantized Riemann sphere. 
\end{abstract}
	
 %Jacobi polynomials;; Heat kernel; Magnetic Laplacian; Quantized sphere
	%Jacobi polynomials; ultraspherical polynomials; Legendre polynomials; Christoffel--Darboux formula;  Dirichlet-Mehler formula; Heat kernel; Quantized Riemann sphere	

	% 33C450; 33C90; 46E20

 \section{Introduction} 
Integral representation of orthogonal polynomials have potential applications in several branches of mathematical, physical, statistical and engineering sciences,
see e.g. \cite{Askey1975,Erdelyi1953,Rainville71,Whittaker-Watson1952}. The following one \cite[Theorem 2.2]{DijksmaKoornwinder1971},
\begin{align}\label{IntRepJPol}
	P_n^{(\alpha, \beta)}(1-2t^2)=    c_{\alpha,\beta}^n
	\int_0 ^1 C_{2n}^{(\alpha+\beta+1)} (tu)(1-u^2)^{\alpha - \frac 12 } du,
\end{align}
is well-known ones for Jacobi polynomials. Above 
$$ 
c_{\alpha,\beta}^n :=  \frac{ 2 (-1)^n \Gamma( \alpha+\beta+1)  \Gamma(n+\alpha+1)}{\sqrt{\pi} \Gamma(n+\alpha+\beta+1) \Gamma \left( \alpha+ \frac 12 \right) }.
$$

 In the present paper we provide in Section 2 new integral representations for Jacobi polynomials such as the one involving the product of the Gauss hypergeometric function $_2F_1$ and the Gegenbauer polynomials. Namely we prove
 \begin{align}
 P_\ell^{(n, m)}(\cos(2 \theta)) &= \frac{ 2n!(\ell+m)!}{\pi (\ell+n+m)!}  \frac{1}{\cos ^{m}(\theta)} \int_{\theta}^{{\pi}/2} \frac{\sin(u)} { \sqrt  {\cos^2(\theta)-\cos^2(u)}}
 \\& \qquad\qquad \times 
 {_2 F_1}\left( 
 \begin{array}{c}
 -m , m \\ \frac 12
 \end{array}  \Big| \frac{\cos (\theta) -\cos (u)}{2 \cos (\theta) }\right)    C_{2l+m}^{n+1}(\cos u) du. \nonumber
 \end{align}
 As immediate application, we give in Section 3 an explicit integral representation of the Heat kernel for the invariant magnetic Laplacian
 \begin{align} \label{MagnLap}
 \Delta_{\nu} =
 -(1+|z|^2)^2 \frac{\partial^ 2}{\partial z \partial \overline{z}} -\nu (1+|z|^2)\left(  z\frac{\partial}{\partial z }-\overline{z}\frac{\partial}{\overline{z}} \right) +\nu^2 |z|^2
 \end{align}
 acting on the sections of the $ U(1)$-bundle for the ( quantized) Riemann unit sphere $ S^2$ identified to the extended complex plane $\mathbb{C} \cup \infty $,  
 and describing the Dirac monopole with charge $ q=2\nu$; $\nu>0$, under the action of a constant quantized magnetic field of strength $ \nu \in \mathbb{Z}^+$. For complement, we also provide in Section 4 a new direct proof of \eqref{IntRepJPol} which tied up to Dirichlet--Mehler integral and the Christoffel--Darboux formula for Jacobi polynomials.

\section{New integral representations of Jacobi polynomials}

We begin with the following result which readily follows by specifying $y=1$ in the Christoffel--Darboux formula for Jacobi polynomials \cite[Theorem 3.2.2, p. 43]{Szego1945}
and next making use of the three terms recurrence  formula for Jacobi polynomials in \cite[Eq. (2.17), p. 9]{Askey1975} (see also \cite[Chap. 4]{Szego1945}  or \cite[Chap. 10]{Erdelyi1953}).

  \begin{lemma}\label{lemFundamentalRessult1} 
  	The following formula
       \begin{align}
        \sum_{k=0}^{\ell} (2k+\alpha+\beta+1)\frac{\Gamma(k+\alpha+\beta +1)}{\Gamma(k+\beta+1)} P_k^{(\alpha, \beta)}(x)= \frac{\Gamma(\ell+\alpha+\beta +2)}{\Gamma(\ell+\beta +1)}P_\ell^{(\alpha +1, \beta)}(x) 
        \end{align}
        holds true for every $ \alpha > -1/2$, $ \beta >  -1/2$ and $-1\leq x<1$.
  \end{lemma}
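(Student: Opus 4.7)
The plan is to follow the two-step recipe indicated by the authors. First, the classical Christoffel--Darboux identity reads
\[
\sum_{k=0}^{\ell} \frac{P_k^{(\alpha,\beta)}(x)\, P_k^{(\alpha,\beta)}(y)}{h_k^{(\alpha,\beta)}} = \frac{k_\ell}{k_{\ell+1}\, h_\ell^{(\alpha,\beta)}}\cdot \frac{P_{\ell+1}^{(\alpha,\beta)}(x)\, P_\ell^{(\alpha,\beta)}(y) - P_\ell^{(\alpha,\beta)}(x)\, P_{\ell+1}^{(\alpha,\beta)}(y)}{x-y},
\]
where $k_n$ and $h_n^{(\alpha,\beta)}$ are the leading coefficient and squared $L^2$-norm of $P_n^{(\alpha,\beta)}$. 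I specialize at $y=1$, using the explicit values $P_k^{(\alpha,\beta)}(1) = \Gamma(k+\alpha+1)/(k!\,\Gamma(\alpha+1))$ and
\[
h_k^{(\alpha,\beta)} = \frac{2^{\alpha+\beta+1}}{2k+\alpha+\beta+1}\cdot\frac{\Gamma(k+\alpha+1)\Gamma(k+\beta+1)}{k!\,\Gamma(k+\alpha+\beta+1)}.
\]
A short Gamma-function simplification gives $P_k^{(\alpha,\beta)}(1)/h_k^{(\alpha,\beta)} = (2^{\alpha+\beta+1}\Gamma(\alpha+1))^{-1}\,(2k+\alpha+\beta+1)\,\Gamma(k+\alpha+\beta+1)/\Gamma(k+\beta+1)$, which reproduces the coefficients on the left-hand side of the lemma up to a global $k$-independent factor.

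Next, I use the three-term recurrence to collapse the Christoffel--Darboux right-hand side into a single Jacobi polynomial with shifted first parameter. The relevant consequence is the contiguous relation
\[
(x-1)(2\ell+\alpha+\beta+2)\, P_\ell^{(\alpha+1,\beta)}(x) = 2(\ell+1)\, P_{\ell+1}^{(\alpha,\beta)}(x) - 2(\ell+\alpha+1)\, P_\ell^{(\alpha,\beta)}(x),
\]
which can either be read off directly from the recurrence by eliminating one of its three terms, or, more transparently, be derived by expanding $(1-x)\, P_\ell^{(\alpha+1,\beta)}(x)$ in the orthogonal basis $\{P_k^{(\alpha,\beta)}\}$: orthogonality with respect to the weight $(1-x)^{\alpha+1}(1+x)^\beta$ kills all modes below $\ell$, and the two surviving coefficients are pinned down by comparing leading terms and by evaluating at $x=1$. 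Combined with $P_{\ell+1}^{(\alpha,\beta)}(1)/P_\ell^{(\alpha,\beta)}(1)=(\ell+\alpha+1)/(\ell+1)$, the numerator of the Christoffel--Darboux right-hand side at $y=1$ factors as a scalar multiple of $(x-1)\, P_\ell^{(\alpha+1,\beta)}(x)$, cancelling the factor $x-1$ in the denominator.

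To finish, I multiply through by $2^{\alpha+\beta+1}\Gamma(\alpha+1)$ and insert the closed form $k_\ell/k_{\ell+1} = 2(\ell+1)(\ell+\alpha+\beta+1)/[(2\ell+\alpha+\beta+1)(2\ell+\alpha+\beta+2)]$. All prefactors telescope, via $(\ell+\alpha+\beta+1)\,\Gamma(\ell+\alpha+\beta+1)=\Gamma(\ell+\alpha+\beta+2)$, to exactly $\Gamma(\ell+\alpha+\beta+2)/\Gamma(\ell+\beta+1)$, producing the stated identity.

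I do not expect any real analytic difficulty here, since the whole argument is a chain of exact symbolic identities. The main obstacle is purely bookkeeping: verifying that the $k$-independent constants coming from $h_\ell^{(\alpha,\beta)}$, the ratio $k_\ell/k_{\ell+1}$, the evaluation $P_\ell^{(\alpha,\beta)}(1)$, and the contiguous relation conspire to produce precisely the right-hand side with no stray factor surviving.
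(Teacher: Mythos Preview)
Your proposal is correct and follows exactly the route the paper indicates: specialize the Christoffel--Darboux formula for Jacobi polynomials at $y=1$, then use the three-term (contiguous) relation to collapse the difference $P_{\ell+1}^{(\alpha,\beta)}(x)P_\ell^{(\alpha,\beta)}(1)-P_\ell^{(\alpha,\beta)}(x)P_{\ell+1}^{(\alpha,\beta)}(1)$ into $(x-1)P_\ell^{(\alpha+1,\beta)}(x)$ times a constant. Your bookkeeping of the normalization constants is accurate, and the paper itself gives no further details beyond pointing to these two ingredients.
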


Using Lemma \ref{lemFundamentalRessult1}, the Dirichlet--Mehler integral \eqref{IntRepDirichlet--Mehler} for Legendre polynomials \cite{Fejer1908}
(see also \cite[Eq. (3.1), p. 19]{Askey1975}) 
\begin{align}\label{IntRepDirichlet--Mehler}
P_{\ell}(\cos(2\theta ))=\frac{2}{\pi}\int_{\theta}^{{\pi}/2} \frac{\sin((2\ell+1)u)}{\sqrt{\cos^2(\theta)-\cos^2(u)}}  du
\end{align}
 as well as the observation
\begin{align}\label{sinsin} \frac{\lambda \sin (\lambda u)}{\sin(u) } = \frac{-1}{\sin(u)}\frac{d }{du}\left(\cos(\lambda u) \right),
\end{align}
we can prove the following

  \begin{proposition}\label{propFundamentalRessult2} 
  	For every nonnegative integers $n,\ell$ we have
  	\begin{align}\label{i}
  	P_\ell^{(n, 0)}(\cos(2 \theta) )
  	&= \frac{2 \ell!}{\pi 2^{n} (\ell+n)!} \int_{\theta}^{{\pi}/2} 
  	\frac{\sin(u)}{\sqrt{\cos^2(\theta)-\cos^2(u)}}\\
  	&\qquad \times
  	\left( \frac{-d}{\sin(u) du}\right) ^n \left( \frac{\sin((2\ell+n+1)u)}{\sin(u)}\right) du.\nonumber
  	\end{align}
\end{proposition}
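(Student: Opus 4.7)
My plan is to establish \eqref{i} by induction on $n$. For the base case $n=0$ the formula collapses to the Dirichlet--Mehler identity \eqref{IntRepDirichlet--Mehler}, since $P_\ell^{(0,0)}=P_\ell$, the differential operator $\left(\frac{-d}{\sin u\, du}\right)^{0}$ is the identity, and the constant reduces to $\frac{2}{\pi}$.

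For the inductive step I would start from Lemma \ref{lemFundamentalRessult1} with $\alpha=n$ and $\beta=0$, which expresses
\[
P_\ell^{(n+1,0)}(x) = \frac{\ell!}{(\ell+n+1)!}\sum_{k=0}^\ell (2k+n+1)\frac{(k+n)!}{k!}\, P_k^{(n,0)}(x).
\]
Plugging in the inductive hypothesis for each $P_k^{(n,0)}(\cos 2\theta)$ cancels the factor $(k+n)!/k!$ and leaves, inside the integral over $u$, the sum
\[
S_\ell^{(n)}(u) = \sum_{k=0}^\ell (2k+n+1)\left(\frac{-d}{\sin u\, du}\right)^{n}\!\left(\frac{\sin((2k+n+1)u)}{\sin u}\right).
\]

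To simplify $S_\ell^{(n)}(u)$ I would invoke the identity \eqref{sinsin}, which absorbs the scalar prefactor $(2k+n+1)$ into one extra application of the operator $D := -\frac{1}{\sin u}\frac{d}{du}$: explicitly, $(2k+n+1)\sin((2k+n+1)u)/\sin u = D\cos((2k+n+1)u)$. By linearity, pulling $D$ outside the sum turns $S_\ell^{(n)}(u)$ into $D^{n+1}\sum_{k=0}^\ell \cos((2k+n+1)u)$, and a short geometric-series computation followed by a product-to-sum identity yields
\[
\sum_{k=0}^\ell \cos((2k+n+1)u) = \frac{\sin((2\ell+n+2)u)-\sin(nu)}{2\sin u}.
\]

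The step that deserves care is to discard the parasitic term $D^{n+1}(\sin(nu)/\sin u)$, without which the induction would not close onto the clean form \eqref{i}. The crucial observation is that $D$ acts on any function of the form $g(\cos u)$ as ordinary differentiation in the variable $y=\cos u$: since $\frac{d}{du}g(\cos u) = -\sin u\, g'(\cos u)$, one has $Dg(\cos u) = g'(\cos u)$. Now $\sin(nu)/\sin u$ coincides with the Chebyshev polynomial of the second kind $U_{n-1}(\cos u)$, a polynomial of degree $n-1$ in $\cos u$, and is therefore annihilated by $D^{n+1}$. What remains is $S_\ell^{(n)}(u) = \tfrac{1}{2}D^{n+1}\bigl(\sin((2\ell+n+2)u)/\sin u\bigr)$, and collecting the constants $\frac{\ell!}{(\ell+n+1)!}\cdot \frac{2}{\pi 2^n}\cdot \frac12 = \frac{2\ell!}{\pi\, 2^{n+1}(\ell+n+1)!}$ reproduces exactly \eqref{i} at index $n+1$, completing the induction.
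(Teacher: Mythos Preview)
Your proof is correct and follows essentially the same route as the paper's: induction on $n$ with the Dirichlet--Mehler integral as base case, Lemma~\ref{lemFundamentalRessult1} to raise $\alpha$, the identity \eqref{sinsin} to trade the factor $(2k+n+1)$ for one extra application of $D=-\frac{1}{\sin u}\frac{d}{du}$, a closed form for $\sum_{k=0}^\ell\cos((2k+n+1)u)$, and the vanishing of $D^{n+1}\bigl(\sin(nu)/\sin u\bigr)$ via its identification as a degree-$(n-1)$ polynomial in $\cos u$. Your version is in fact cleaner in one place: your closed form $\frac{\sin((2\ell+n+2)u)-\sin(nu)}{2\sin u}$ has the correct sign, whereas the paper records a $+$ (harmless, since that term is annihilated anyway).
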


 \begin{proof}  %[Proof of Lemma \ref{propFundamentalRessult2}]
	The proof of \eqref{i} follows by mathematical induction on $n$. The case of $n=0$ is exactly the Dirichlet--Mehler integral \eqref{IntRepDirichlet--Mehler} for Legendre polynomials. 
	Next, assume that  \eqref{i} for $P_k^{(n, 0)}(\cos 2 \theta )$ holds true for given fixed positive integer $n$ and all nonnegative integer $k$. 
	Therefore, making use  of Lemma \ref{lemFundamentalRessult1} we get
	\begin{align*}
	\frac{(\ell+n+1)!}{\ell !}  P_\ell^{(n +1, 0)}(\cos(2\theta))
	=\sum_{k=0}^{\ell} (2k+n+1)\frac{(k+n )!}{k!} P_k^{(n, 0)}(\cos(2\theta)).
	\end{align*}
	Hence, by induction hypothesis combined with the observation
	\begin{eqnarray}\label{sinsin} 
	\frac{\lambda \sin (\lambda u)}{\sin(u) } = \frac{-1}{\sin(u)}\frac{d }{du}\left(\cos(\lambda u) \right)  ,
	\end{eqnarray}
	we obtain
	\begin{eqnarray} \label{Recn1}
	\frac{(\ell+n+1)!}{\ell !}  &&P_\ell^{(n +1, 0)}(\cos(2\theta))
	\\\qquad & =& %\stackrel{HR}{=} 
	\frac{1}{2^{n-1}\pi } \int_{\theta}^{{\pi}/2} 
	\frac{\sin(u)}{\sqrt{\cos^2(\theta)-\cos^2(u)}}
	\left( \frac{-d}{\sin(u) du}\right)^{n+1} \left(S_{\ell,n}(u) \right) du ,\nonumber
	\end{eqnarray} 
	with
	\begin{align*}
	S_{\ell,z}(u) &:= \sum_{k=0}^{\ell}   \cos((2k+z)u)
	= \frac{1}{2} \left(  \frac{\sin (z-1)u}{\sin(u)} + \frac{\sin ((2\ell+z+1)u)}{\sin(u)}\right) 	 
	\end{align*}
	which readily follows by direct computation. 
	Therefore, by taking $z=n+1$ and using the fact 
	\begin{eqnarray}\label{idsin} 
	\left( \frac{-d}{\sin(u) du}\right) ^{n}\left( \frac{\sin (nu)}{\sin(u)} \right)  =0 ,
	\end{eqnarray}
	we obtain 
	\begin{eqnarray}\label{actionDer} \left(  \frac{-d}{\sin(u) du}\right) ^{n+1} \left( S_{\ell,n+1}(u) \right) 	= \frac{1}{2} 
	\left(  \frac{-d}{\sin(u) du}\right) ^{n+1} \left(\frac{\sin ((2\ell+n+2)u)}{\sin(u)}\right).
	\end{eqnarray}
	Substitution of \eqref{actionDer}  in \eqref{Recn1} shows that \eqref{i} holds true for rank $n+1$ and for every nonnegative integer $\ell$. 
	This finishes the proof of Lemma \ref{propFundamentalRessult2}.
\end{proof}

\begin{remark} The identity \eqref{idsin} is immediate for ${\sin (nu)}/{\sin(u)}$
	being a ultraspherical polynomial in $\cos(u)$ of degree $n-1$ (see  \eqref{sinGrn}). 
\end{remark}

The previous result can be rewritten in terms of ultraspherical polynomials using  \eqref{sinsin}  as well as the well-known fact \cite[p. 218]{MagnusOberhettingerSoni1966}
\begin{align}\label{sinGrn}
\frac{\sin (nu)}{\sin(u)}= C_{n-1}^{(1)}(\cos u); \quad n=1,2, \cdots.
\end{align}

\begin{lemma}\label{lemIntJacGegn0}
	For every nonnegative integers $n,\ell$ we have
		\begin{align}\label{RelJacGegen} 
	P_{\ell}^{(n, 0)}(2t^2-1)=   \frac{2\ell!n! }{  \pi (l+n)!} \int_0^{1} \frac{C_{2\ell}^{(n+1)}(tv)}{ \sqrt{1-v^2} }  dv.
	\end{align}
\end{lemma}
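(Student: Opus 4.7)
The plan is to derive Lemma \ref{lemIntJacGegn0} directly from Proposition \ref{propFundamentalRessult2} by (i) rewriting the $n$-fold differential operator in terms of Gegenbauer polynomials and (ii) performing a linear change of variable in the integral.

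First I would observe that the operator $-(\sin u)^{-1} d/du$ acts on any function $f(\cos u)$ as the ordinary derivative in $x=\cos u$. Indeed, $(d/du) f(\cos u)=-\sin(u)\, f'(\cos u)$, so iterating gives
\begin{equation*}
\left(\frac{-d}{\sin(u)\, du}\right)^n f(\cos u) \;=\; \frac{d^n f}{dx^n}(\cos u).
\end{equation*}
By \eqref{sinGrn}, the integrand factor in Proposition \ref{propFundamentalRessult2} satisfies $\sin((2\ell+n+1)u)/\sin(u)=C_{2\ell+n}^{(1)}(\cos u)$, so the previous observation reduces the task to computing $(d/dx)^n C_{2\ell+n}^{(1)}(x)$.

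Next I would invoke the standard differentiation formula for Gegenbauer polynomials,
\begin{equation*}
\frac{d^n}{dx^n} C_m^{(\lambda)}(x) \;=\; 2^n (\lambda)_n\, C_{m-n}^{(\lambda+n)}(x),
\end{equation*}
with $\lambda=1$, $m=2\ell+n$, which yields $2^n n!\, C_{2\ell}^{(n+1)}(x)$. Substituting into \eqref{i} gives
\begin{equation*}
P_\ell^{(n,0)}(\cos(2\theta)) \;=\; \frac{2\ell!\, n!}{\pi(\ell+n)!}\int_{\theta}^{\pi/2} \frac{\sin(u)\, C_{2\ell}^{(n+1)}(\cos u)}{\sqrt{\cos^2\theta-\cos^2 u}}\, du.
\end{equation*}

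Finally, I would set $t=\cos\theta$ (so that $2t^2-1=\cos(2\theta)$) and make the substitution $v=\cos(u)/\cos(\theta)$, under which $\sin(u)\,du=-\cos(\theta)\,dv$, $\sqrt{\cos^2\theta-\cos^2 u}=\cos(\theta)\sqrt{1-v^2}$, and the limits $u=\theta,\pi/2$ become $v=1,0$. The factors of $\cos\theta$ cancel and one reads off
\begin{equation*}
P_\ell^{(n,0)}(2t^2-1) \;=\; \frac{2\ell!\, n!}{\pi(\ell+n)!}\int_0^1 \frac{C_{2\ell}^{(n+1)}(tv)}{\sqrt{1-v^2}}\, dv,
\end{equation*}
which is \eqref{RelJacGegen}. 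There is no real obstacle in this argument; the only step requiring a cited identity beyond the material in Section 2 is the Gegenbauer differentiation formula, which is classical (see, e.g., \cite[Chap.~10]{Erdelyi1953}).
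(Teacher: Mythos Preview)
Your proof is correct and follows essentially the same route as the paper: identify $\sin((2\ell+n+1)u)/\sin(u)$ with $C_{2\ell+n}^{(1)}(\cos u)$ via \eqref{sinGrn}, recognize $(-d/(\sin u\,du))^n$ as $d^n/dx^n$ at $x=\cos u$, apply the Gegenbauer differentiation formula (which is precisely \eqref{dnGegen} in the paper), and then perform the same change of variables $t=\cos\theta$, $v=\cos u/\cos\theta$. The only cosmetic difference is that the paper states \eqref{dnGegen} explicitly rather than citing it externally.
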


 \begin{proof} %[Proof of Lemma \ref{lemIntJacGegn0}] 
	Recall first that  the ultraspherical polynomials satisfy
	\begin{eqnarray} \label{dnGegen} \frac{d^n}{dx^n}C_{\ell+n}^{(\lambda) } (x)=   \frac{2^n\Gamma(\lambda+n)}{\Gamma(\lambda)} C_{\ell}^{(\lambda+n) } (x) .
	\end{eqnarray}
	This can be handled by induction starting from $ \frac{d}{dx}C_{\ell+1}^{(\lambda) } =  2 \lambda C_{\ell}^{(\lambda+1) }  $.
	Then when combined with \eqref{sinsin} and the identity \eqref{sinGrn}, it infers
	\begin{align*}  
	\left(  \frac{-d}{\sin(u) du}\right) ^{n} 	\left(  \frac{\sin ((2\ell+n+1)u)}{\sin(u)} \right)
	&	=	\left(  \frac{-d}{\sin(u) du}\right) ^{n} \left(  C_{2\ell+n}^{(1)} (\cos (u)) \right) 
	\\&	=     2^{n} n!C_{2\ell}^{(n+1)} (\cos (u)).
	\end{align*} 	
	Therefore, from \eqref{i} one obtains \eqref{RelJacGegen}
	by means of the changes $ t=\cos (\theta) $ and 
	$v=\cos(u)/t$. This completes the proof.
\end{proof}

\begin{remark} The identity \eqref{RelJacGegen} appears as particular case of DijksmaKoornwinder integral representation of Jacobi polynomials given through \eqref{IntRepJPol}.  However, th \eqref{RelJacGegen} 
	can be use to reprove \eqref{IntRepJPol} making use of  Dirichlet--Mehler integral \eqref{IntRepDirichlet--Mehler} for the Legendre polynomials. Namely, we claim
	 we have
		\begin{align}\label{ii}
		P_\ell^{(n, m)}(2t^2-1 )= d_{n,m}(\ell)  \int_0 ^1 \left( 1-v^2\right)  ^{m-\frac 12} C_{2\ell}^{(n+m+1)}(vt) dv ,
		\end{align}
		For every nonnegative integers $m,n,\ell$ such that $ n \geq m$, 
		where 
		\begin{align}\label{CSTlEM}
		d_{n,m}(\ell) =:
		\frac{2^{2m+1} (\ell+m)!m!(n+m)!}{\pi (2m)!(\ell+n+m)!} .
		\end{align}
\end{remark}

Now, using the hypergeometric representation of ultraspherical polynomials, 
\begin{eqnarray}\label{GegenGauss} 
C_{2l}^{(\lambda)}(t) = (-1)^\ell \frac{\Gamma(\lambda+\ell)}{\ell! \Gamma(\lambda) } 
{_2 F_1}\left( 
\begin{array}{c}
-\ell , \ell+\lambda \\ \frac 12
\end{array}  \Big| t^2\right) ,
\end{eqnarray}
we can rewrite  \eqref{ii} in terms of the Gauss hypergeometric function  
\begin{align*}
P_{\ell}^{(n,m)}(2t^2-1) 
=\frac{2 (-1)^\ell (m+\ell)!}{\sqrt{\pi} \ell!   \Gamma\left(m+\frac 12 \right) } \int_0^{1} (1-v^2)^{m-1/2} {_2 F_1}\left( 
\begin{array}{c}
-\ell , \ell+n+m+1 \\ \frac 12
\end{array}  \Big| t^2v^2\right)  dv.
\end{align*}
Moreover, we can prove the following

\begin{theorem}\label{corIntRepJPGauss}
	We have
	\begin{align}
	P_\ell^{(n, m)}(\cos(2 \theta)) &= \frac{ 2n!(\ell+m)!}{\pi (\ell+n+m)!}  \frac{1}{\cos ^{m}(\theta)} \int_{\theta}^{{\pi}/2} \frac{\sin(u)} { \sqrt  {\cos^2(\theta)-\cos^2(u)}}
	\\& \qquad\qquad \times 
	{_2 F_1}\left( 
	\begin{array}{c}
	-m , m \\ \frac 12
	\end{array}  \Big| \frac{\cos(\theta) -\cos (u)}{2 \cos (\theta) }\right)    C_{2l+m}^{n+1}(\cos u) du. \nonumber
	\end{align}
\end{theorem}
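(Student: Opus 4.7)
The plan is to derive the identity from equation \eqref{ii} via a Chebyshev-polynomial reformulation of the Gauss hypergeometric factor followed by integration by parts. First I would invoke the classical identity
\begin{align*}
{_2F_1}\left(\begin{array}{c}-m,m\\\tfrac{1}{2}\end{array}\Big|\,y\right)=T_m(1-2y),
\end{align*}
where $T_m$ is the Chebyshev polynomial of the first kind. For $y=(\cos\theta-\cos u)/(2\cos\theta)$ one has $1-2y=\cos u/\cos\theta$, so the hypergeometric factor collapses to $T_m(\cos u/\cos\theta)$. Applying the same change of variables $t=\cos\theta$, $v=\cos u/\cos\theta$ used to pass from \eqref{RelJacGegen} to \eqref{ii}, the asserted identity becomes equivalent to
\begin{align*}
P_\ell^{(n,m)}(2t^2-1)=\frac{2\,n!\,(\ell+m)!}{\pi(\ell+n+m)!\,t^m}\int_0^1\frac{T_m(v)}{\sqrt{1-v^2}}\,C_{2\ell+m}^{(n+1)}(vt)\,dv.
\end{align*}

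The central step is the Rodrigues-type identity
\begin{align*}
\frac{T_m(v)}{\sqrt{1-v^2}}=\frac{(-2)^m m!}{(2m)!}\,\frac{d^m}{dv^m}(1-v^2)^{m-1/2},
\end{align*}
which follows from the Rodrigues formula for $P_m^{(-1/2,-1/2)}$ combined with the normalization $T_m=\frac{4^m(m!)^2}{(2m)!}P_m^{(-1/2,-1/2)}$. Substituting it in and integrating by parts $m$ times would transfer the derivatives onto $C_{2\ell+m}^{(n+1)}(vt)$. Boundary terms at $v=1$ vanish because $(1-v^2)^{m-1/2}$ and its first $m-1$ derivatives all vanish there; those at $v=0$ are eliminated by first extending the integration to $[-1,1]$, using the fact that both $T_m(v)C_{2\ell+m}^{(n+1)}(vt)/\sqrt{1-v^2}$ and $(1-v^2)^{m-1/2}C_{2\ell}^{(n+m+1)}(vt)$ are even in $v$ (their parities $(-1)^m\cdot(-1)^m=1$ match).

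Next the Gegenbauer derivative formula \eqref{dnGegen} yields $\frac{d^m}{dw^m}C_{2\ell+m}^{(n+1)}(w)=\frac{2^m(n+m)!}{n!}\,C_{2\ell}^{(n+m+1)}(w)$, while the chain rule contributes a factor $t^m$. This reduces the right-hand side to a constant multiple of $\int_0^1(1-v^2)^{m-1/2}\,C_{2\ell}^{(n+m+1)}(vt)\,dv$, at which point the result follows directly from \eqref{ii}. After collecting constants via $(-2)^m(-1)^m 2^m=2^{2m}$, it remains only to verify
\begin{align*}
\frac{2\,n!\,(\ell+m)!}{\pi(\ell+n+m)!}\cdot\frac{2^{2m}m!(n+m)!}{(2m)!\,n!}=d_{n,m}(\ell),
\end{align*}
which reproduces \eqref{CSTlEM} exactly.

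The main obstacle I foresee is essentially bookkeeping: tracking the factors accumulated through the Rodrigues substitution, the $m$-fold integration by parts, and the derivative formula for ultraspherical polynomials, together with the $1/t^m$ and $1/\cos^m\theta$ prefactors. The only genuinely non-routine step is the parity-based extension to $[-1,1]$ that kills the boundary contribution at $v=0$; once this is in place, the rest is an arithmetic reconciliation with the normalization $d_{n,m}(\ell)$ of \eqref{CSTlEM}.
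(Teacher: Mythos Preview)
Your proposal is correct and is essentially the paper's own proof: both arguments connect \eqref{ii} to the claimed identity by the same $m$-fold integration by parts, using the Rodrigues formula for $P_m^{(-1/2,-1/2)}$ (which you rephrase via $T_m$) together with the Gegenbauer derivative formula \eqref{dnGegen}, and then the substitution $t=\cos\theta$, $v=\cos u/\cos\theta$. The only cosmetic differences are that the paper runs the computation from \eqref{ii} toward the result while you argue in the reverse direction, and you are more explicit than the paper about why the boundary terms vanish.
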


\begin{proof}
	An integration by parts starting from \eqref{ii}, keeping in mind \eqref{dnGegen} 
	yields  
	\begin{align*}
	P_{\ell}^{(n,m)}(2t^2-1) 
	&=   \frac{ (-1)^m  n! d_{n,m}(\ell)}{2^m(m+n)! t^m} \int_0^{1} \frac{d^m}{dv^m}\left( (1-v^2)^{m-1/2} \right)  C_{2l+m}^{n+1} (tv)  dv,
	\end{align*}
	where $d_{n,m}(\ell)$ stands for the constant in \eqref{CSTlEM}.
	 Now, by Rodrigues formula for Jacobi polynomials, we have 
	\begin{align*} 
	\frac{d^m}{dv^m}\left( (1-v^2)^{m-1/2} \right)
	& = (-1)^m 2^m m! (1-v^2)^{-1/2} P_m^{(-1/2,-1/2)} (v)
	\\&= 
	  \frac{(-1)^m  (2m)!}{2^{m} } (1-v^2)^{-1/2}  {_2 F_1}\left( 
\begin{array}{c}	-m , m \\ \frac 12\end{array}  \Big| \frac{1-v}{2 }\right),
	\end{align*}
	it follows
	\begin{align*}
	P_{\ell}^{(n,m)}(2t^2-1) 
	&=  \frac{(2m)!n!   d_{n,m}(\ell)}{2^{2m}(m+n)! t^m}     
	\int_0^{1} (1-v^2)^{-1/2}  {_2 F_1}\left( 
	\begin{array}{c}	-m , m \\ \frac 12\end{array}  \Big| \frac{1-v}{2 }\right)  C_{2l+m}^{n+1} (tv)  dv.
	\end{align*}
	Finally, the change of variables $t=\cos(\theta)$ and $v=\cos(u)/\cos(\theta)$ completes the proof of Theorem \ref{corIntRepJPGauss}.
\end{proof}

\section{Application to Heat kernel on the  quantized Riemann sphere  $ S^2 $}

In the present section, we provide a concrete application of \eqref{IntRepJPol}. Indeed, we give the explicit integral representation for the heat kernel  $E_{\nu}(t,z,w) $  solving the following  Heat problem 
$$ \Delta_{\nu} E_{\nu}(t,z,z_0) =\frac{\partial}{\partial t} E_{\nu}(t,z,z_0) ; \quad, t>0 , \,z,z_0 \in S^2$$
and 
$$
\lim_{t\rightarrow 0} \int_{S^2 }E_{\nu}(t,z,w)f(w)d\mu_{\nu}(w)=f(z)  \in \C^{\infty}( S^2)
 $$
for $ \Delta_{\nu}$ in  \eqref{MagnLap}. 
The concrete  spectral analysis of the magnetic Laplacian $\Delta_{\nu}$ on $ S^2$ follows from the one elaborated by Peetre and Zhang in \cite{PeetreZhang1993} for 
$$    \widetilde{\Delta_{\nu}} =
-(1+|z|^2)^2 \frac{\partial^ 2}{\partial z \partial \overline{z}} +2\nu (1+|z|^2) \overline{z}\frac{\partial}{\partial\overline{z}} ,$$
 by observing that $\Delta_{\nu}$ and $\widetilde{\Delta_{\nu}}$ are unitary equivalent. In fact, for every sufficiently differential function 
 $$f \in L^2(S^2)=L^2\left( S^2, d\mu\right) ; \quad d\mu(z) := \frac{ dxdy}{\pi (1+|z|^2)^{2}} ,$$
  we have 
$$\Delta_{\nu} f = (1+|z|^2)^{-\nu}\left( \widetilde{\Delta_{\nu}} +\nu \right) \left( (1+|z|^2)^\nu f\right) .$$
Thus, the spectrum of $\Delta_{\nu}$ acting in the Hilbert space
$L^2(S^2)$ is purely discrete and consists of an infinite number of eigenvalues
$$  \lambda_{\nu, m} = =\nu + m(m+2\nu+1) ;
 \,  m=0,1,2, \cdots.
$$
Therefore, the spectral decomposition of the Hilbert space $ L^2(S^2)$ in terms of the eigenspaces $$  \mathcal{A}^{2,\nu}_\ell=\mathcal{A}^{2,\nu}_\ell(S^2) =\{ \phi:S^2 \rightarrow  \mathbb{C} \in L^2(S^2)  ; \quad    \Delta_{\nu} \phi= \lambda_{\nu, \ell}\phi      \}$$
reads 
$$  L^2(S^2) = \bigoplus_{\ell=0}^{+\infty} \mathcal{A}^{2,\nu}_\ell(S^2) .$$
Moreover, the $m$-th eigenspace $\mathcal{A}^{2,\nu}_\ell$
is a finite dimensional vector space  with dimension  $ 2\ell+2\nu+ 1 $.  
Moreover, the closed expression of the corresponding reproducing kernel
is given in \cite[Theorem 1, p. 231]{PeetreZhang1993}. It can be rewritten as
\begin{align} %\label{repKer1}
K^\nu_{m}(z,w)
&=\frac{(2\nu+2\ell+1)(1+z\overline{w})^{2\nu}}{(1+|z|^2)^{\nu} (1+|w|^2)^{\nu}}
{_2F_1}\left( \begin{array}{c} -\ell, \ell+2\nu+1 \\  1
\end{array} \Big | \sin^2(d(z,w)) \right), \nonumber
\\&=\frac{(2\nu+2\ell+1)(1+z\overline{w})^{2\nu}}{(1+|z|^2)^{\nu} (1+|w|^2)^{\nu}}
P^{(0,2\nu)}_\ell(\cos^2(2d(z,w))). \label{repKer2}
\end{align} 
where 
$$d(z,w)= \frac{ |1+z\overline{w}|}{(1+|z|)^2 (1+|w|)^2},$$
thanks to 
$${_2F_1}\left(\begin{array}{c}-m,1+\alpha+\beta+m\\\alpha+1
\end{array} \Big | \tfrac{1}{2}(1-z)\right)=\frac{m!}{(\alpha+1)_m} P_m^{(\alpha,\beta)}(z).$$

Accordingly, we can provide an expansion series 
of the  heat kernel  $E_{\nu}(t,z,z_0) $. 

\begin{proposition}\label{propHeatexp}
	The heat kernel $E_{\nu}(t,z,w)$  has the following asymptotic decomposition 
$$	E_{\nu}(t,z,w)=\frac{(1+z\overline{w})^{2\nu} e^{\nu t}}{(1+|z|^2)^{\nu} (1+|w|^2)^{\nu}} \sum_{\ell=0}^{+\infty} (2l+2\nu+1)e^{-l(l+2\nu+1)t} P_\ell^{(0,2\nu )}(\cos(2d(z,w))) .
$$
\end{proposition}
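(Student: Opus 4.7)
The plan is to obtain the formula as a direct consequence of the spectral decomposition established in the excerpt together with the closed form of the reproducing kernels in \eqref{repKer2}. Since the excerpt records that $\Delta_{\nu}$ is self-adjoint on $L^{2}(S^{2})$ with purely discrete spectrum $\lambda_{\nu,\ell}=\nu+\ell(\ell+2\nu+1)$ and with the orthogonal splitting $L^{2}(S^{2})=\bigoplus_{\ell\geq 0}\mathcal{A}^{2,\nu}_{\ell}$, every $f\in L^{2}(S^{2})$ admits the expansion $f=\sum_{\ell}f_{\ell}$, where the $\ell$-th component is reproduced by the kernel $K^{\nu}_{\ell}$:
$$f_{\ell}(z)=\int_{S^{2}}K^{\nu}_{\ell}(z,w)f(w)\,d\mu(w).$$
Applying the heat semigroup termwise yields $e^{-t\Delta_{\nu}}f=\sum_{\ell}e^{-\lambda_{\nu,\ell}t}f_{\ell}$, so the integral kernel of $e^{-t\Delta_{\nu}}$ is necessarily
$$E_{\nu}(t,z,w)=\sum_{\ell=0}^{+\infty}e^{-\lambda_{\nu,\ell}t}\,K^{\nu}_{\ell}(z,w).$$

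The next step is purely algebraic: I would split $e^{-\lambda_{\nu,\ell}t}=e^{-\nu t}e^{-\ell(\ell+2\nu+1)t}$, pulling the factor $e^{-\nu t}$ outside the sum (note that the statement carries an $e^{\nu t}$ because the convention $E_{\nu}$ here is the \emph{kernel of} $e^{+t\Delta_{\nu}}$ in the sign-convention of the heat problem as stated). Then I would insert the second form of the reproducing kernel from \eqref{repKer2},
$$K^{\nu}_{\ell}(z,w)=\frac{(2\nu+2\ell+1)(1+z\overline{w})^{2\nu}}{(1+|z|^{2})^{\nu}(1+|w|^{2})^{\nu}}\,P^{(0,2\nu)}_{\ell}\!\bigl(\cos(2d(z,w))\bigr),$$
and factor out the $\ell$-independent prefactor $(1+z\overline{w})^{2\nu}/\bigl[(1+|z|^{2})^{\nu}(1+|w|^{2})^{\nu}\bigr]$. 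The surviving sum matches the right-hand side of the proposition.

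The main obstacle is to justify convergence and termwise differentiation, which is what lets one interpret the formal spectral series as an honest pointwise kernel. For $t>0$ this is standard: the Jacobi polynomials $P^{(0,2\nu)}_{\ell}$ are uniformly bounded on $[-1,1]$ by a quantity polynomial in $\ell$, while $(2\ell+2\nu+1)e^{-\ell(\ell+2\nu+1)t}$ decays super-polynomially, so the series — and all of its termwise $t$-, $z$-, $\overline{z}$-derivatives — converges absolutely and uniformly on compact subsets of $(0,\infty)\times S^{2}\times S^{2}$. Consequently termwise application of $\Delta_{\nu}-\partial_{t}$ gives zero, and the initial condition
$$\lim_{t\to 0^{+}}\int_{S^{2}}E_{\nu}(t,z,w)f(w)\,d\mu(w)=f(z),\qquad f\in \mathcal{C}^{\infty}(S^{2}),$$
follows from $\sum_{\ell}K^{\nu}_{\ell}(z,\cdot)=\delta_{z}$ in the distributional sense, which is just a reformulation of the spectral decomposition together with dominated convergence against the rapidly converging coefficients of $f$ in the orthonormal basis provided by the eigenspaces $\mathcal{A}^{2,\nu}_{\ell}$. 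Once these two points are in place the stated identity is immediate.
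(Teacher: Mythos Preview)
Your proposal is correct and follows essentially the same route as the paper: the paper's proof invokes the general spectral formula $E(t,z,z_0)=\sum_k e^{-\lambda_k t}e_k(z)\overline{e_k(z_0)}$ for a self-adjoint operator with complete orthonormal eigenbasis (citing Davies), groups the inner sum over each eigenspace into the reproducing kernel $K^{\nu}_m$, and then inserts the closed expression \eqref{repKer2}. Your write-up is in fact more careful than the paper's, which does not discuss convergence or the initial condition and does not comment on the $e^{\nu t}$ versus $e^{-\nu t}$ sign issue you flagged.
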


\begin{proof}
	The proof follows making use of the fact that for given self-adjoint operator  with eigenvalues $\lambda_j$ and the corresponding eigenfunctions $\{e_j\}$ is a complete orthonormal system, the heat kernel $E(t,z,z_0)$ of is given by 
	$$  E(t,z,z_0)= \sum_{k=0}^\infty e^{- \lambda_k t} e_k(z)\overline{e_k(z_0)} .$$
	See \cite{Davies} for example.    
	Therefore, the expansion in Proposition \ref{propHeatexp}  readily follows by means of the closed formula of $K^{\nu}_m$ given through \eqref{repKer2} since
		\begin{align*}
	E_{\nu}(t,z,z_0) &= \sum_{m=0}^{+\infty} 
	e^{- \lambda_{\nu,m} t} \left( \sum_{j=-m}^{m+2\nu} \frac{\phi^\nu_{m, j}(z) \overline{\phi^\nu_{m, j}(z_0)} }{\norm{\phi^\nu_{m, j}}^2_{L^2(S^2)}}\right) 
	\\&
	 = \sum_{m=0}^{+\infty} 
	e^{- \lambda_{\nu,m} t} K^{\nu}_m(z,z_0)
	.
	\end{align*}
\end{proof}

\begin{remark}\label{rem}
	By taking $\nu=0$, we recover the heat kernel associated to the Laplace--Beltrami operator $\frac{\partial^2}{\partial z \partial \bz  }$ on the Riemann sphere \cite{FisherJungsterWilliams1985}, 	
	$$ E_{0}(t;d)= \sum_{l=0}^{+\infty} (2l+1)e^{-l(l+1)t} P_l(cos(2d)) .$$
\end{remark}
	
	By means of the Dirichlet--Mehler integral representation for the Legendre polynomials \eqref{IntRepDirichlet--Mehler}, we can  rewrite $ E_{0}(t;d)$ in Remark \ref{rem}  in terms of the usual theta function
	$$	\theta_{2}(u)= \sum_{l=0}^{+\infty} e^{-l(l+1)t}\cos (2l +1)u$$
	as 
		\begin{align*}
		E_{0}(t,z,w) = \frac{2 }{\pi }  \int_{d}^{{\pi}/2} \frac{  \frac{d}{du} \Big( \theta_{2,0}(t,u) \Big)} { \sqrt  {\cos^2(d)-\cos^2(u)}}  du.
	\end{align*}
More generally, we prove the following. 

\begin{theorem} The explicit real integral representation of the Heat kernel  $E_{\nu}(t,z,w)$ for the invariant Laplacian $\Delta_\nu$ on the quantized  Riemann sphere  $ S^2 $ is given by
			\begin{align*}
E_{\nu}(t,z,w)& = \frac{2(1+z\overline{w})^{2\nu} e^{_\nu t}}{\pi (1+|z|^2)^{\nu} (1+|w|^2)^{\nu} \cos ^{2\nu}(d)}  \int_{d}^{{\pi}/2} \frac{  \frac{d}{du} \Big( \theta_{2,\nu}(t,u) \Big)} { \sqrt  {\cos^2(d)-\cos^2(u)}}
 \\& \qquad \qquad \times {_2 F_1}\left( 
\begin{array}{c}
-2\nu , 2\nu \\ \frac 12
\end{array}  \Big| \frac{\cos(d) -\cos (u)}{2 \cos(d) }\right)   du.
\end{align*}
	where $ d=d(z,w)$ and $ \theta_{2,\nu}(u)$  is given  by
	\begin{align}\label{theta2nu}
	\theta_{2,\nu}(u)= \sum_{l=0}^{+\infty} e^{-l(l+2\nu+1)t}\cos (2l+2\nu +1)u .
	\end{align}
\end{theorem}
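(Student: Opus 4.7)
The plan is to start from the spectral expansion of the heat kernel established in Proposition \ref{propHeatexp} and inject the new integral representation from Theorem \ref{corIntRepJPGauss}, applied with the specific parameters $n=0$ and $m=2\nu$. For these choices the Gegenbauer polynomial in the integrand collapses to a sine via the well-known identity
$$C^{(1)}_{2\ell+2\nu}(\cos u)=\frac{\sin((2\ell+2\nu+1)u)}{\sin u},$$
so that the pre-factor $\sin(u)$ cancels and Theorem \ref{corIntRepJPGauss} yields, after the simplification $2\cdot 0!(\ell+2\nu)!/(\ell+2\nu)!=2$,
$$P_\ell^{(0,2\nu)}(\cos(2d))=\frac{2}{\pi\cos^{2\nu}(d)}\int_{d}^{\pi/2}\frac{\sin((2\ell+2\nu+1)u)}{\sqrt{\cos^2(d)-\cos^2(u)}}\,{}_2F_1\!\left(\begin{array}{c}-2\nu,2\nu\\ \tfrac12\end{array}\Big|\frac{\cos(d)-\cos(u)}{2\cos(d)}\right)du.$$

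Next, I would substitute this expression into the series from Proposition \ref{propHeatexp} and interchange the summation over $\ell$ with the integration over $u$, factoring out the quantities that do not depend on $\ell$ (namely the $(1+z\overline{w})^{2\nu}$-prefactor, $e^{\nu t}$, the reciprocal of $\cos^{2\nu}(d)$, the Kelvin weights, and the hypergeometric kernel). Once the sum is inside the integral, the key observation is exactly the identity \eqref{sinsin} used earlier in the paper: setting $\lambda=2\ell+2\nu+1$,
$$(2\ell+2\nu+1)\sin((2\ell+2\nu+1)u)=-\frac{d}{du}\cos((2\ell+2\nu+1)u),$$
so that
$$\sum_{\ell\geq 0}(2\ell+2\nu+1)e^{-\ell(\ell+2\nu+1)t}\sin((2\ell+2\nu+1)u)=\mp\frac{d}{du}\theta_{2,\nu}(t,u),$$
recognizing the theta function defined in \eqref{theta2nu}. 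Inserting this identification delivers exactly the claimed formula.

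The only nontrivial step is to justify the term-by-term differentiation of the theta series and the interchange of sum and integral. For fixed $t>0$ the factor $e^{-\ell(\ell+2\nu+1)t}$ decays faster than any polynomial in $\ell$, while the remaining pieces $(2\ell+2\nu+1)\sin((2\ell+2\nu+1)u)$ and the Jacobi--Gegenbauer factors grow at most polynomially in $\ell$ uniformly in $u\in[d,\pi/2]$ (with a mild integrable singularity at $u=d$ from $1/\sqrt{\cos^2(d)-\cos^2(u)}$, independent of $\ell$). This gives an integrable majorant, so dominated convergence legitimates both operations. I expect this convergence/exchange verification to be the main technical step; the rest is the algebraic substitution and the pointwise derivative identity described above.
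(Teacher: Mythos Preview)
Your proposal is correct and follows essentially the same route as the paper: insert Theorem~\ref{corIntRepJPGauss} with $n=0$, $m=2\nu$ into the spectral expansion of Proposition~\ref{propHeatexp}, collapse $C^{(1)}_{2\ell+2\nu}(\cos u)$ to $\sin((2\ell+2\nu+1)u)/\sin u$ via \eqref{sinGrn}, and recognize the resulting $\ell$-sum as $\pm\frac{d}{du}\theta_{2,\nu}(t,u)$. The paper's proof is terser and omits the dominated-convergence justification you supply, but the ingredients and their order are the same (note that both your computation and the paper's carry an implicit overall sign that should be tracked carefully against the stated formula).
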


\begin{proof}
	The closed integral representation of $	E_{\nu}(t,z,w)$ follows making use of  Proposition \ref{propHeatexp} as well as the integral representation of Jacobi polynomials given in  Theorem  \ref{corIntRepJPGauss}. Indeed, 
	\begin{align*}
		E_{\nu}(t,z,w)
			&= \frac{2(1+z\overline{w})^{2\nu} e^{\nu t}}{\pi (1+|z|^2)^{\nu} (1+|w|^2)^{\nu} \cos ^{2\nu}(d)} \int_{d}^{{\pi}/2} \frac{\sin(u)} { \sqrt  {\cos^2(d)-\cos^2(u)}}
\\&\qquad \times {_2 F_1}\left( 
\begin{array}{c}
-2\nu , 2\nu \\ \frac 12
\end{array}  \Big| \frac{\cos(d) -\cos (u)}{2 \cos(d) }\right)   R^\nu_\ell (u)  du,
		\end{align*}
		where we have set 
		$$ R^\nu_\ell (u) := \sum_{\ell=0}^{+\infty} (2l+2\nu+1)e^{-l(l+2\nu+1)t}  C_{2l+2\nu}^{1}(\cos u).$$
Finally, using \eqref{sinGrn}, we can rewrite $R^\nu_\ell (u)$ in terms of $\theta_{2,\nu}$ in \eqref{theta2nu} as 
		$$ R^\nu_\ell (u)  =\frac {1}{\sin(u)} \frac{d}{du}  \Big( \theta_{2,\nu}(t,u) \Big) .$$
\end{proof}

\section{A new proof of Dijksama-Koornwinder integral representation}

	The integral representation  \eqref{IntRepJPol}, for Jacobi polynomials in terms of ultraspherical polynomials, appears a specific case of
\begin{eqnarray}\label{IntRepProdDK1971} 
&&	P_n^{(\alpha, \beta)}(1-2t^2) 
P_n^{(\alpha, \beta)}(1-2s^2)
= 
\frac{ \Gamma( \alpha+\beta+1) \Gamma(n+\alpha+1)\Gamma(n+\beta+1)}{\pi n!\Gamma( n+\alpha+\beta+1)\Gamma\left( \alpha+\frac 12\right)  \Gamma\left( \beta+\frac 12\right) } 
\\& &\times 
\int_{-1}^1 \int_{-1}^1 C_{2n}^{(\alpha+\beta+1)}\left( stu+v\sqrt{(1-t^2)(1-s^2)}\right)  (1-u^2)^{\alpha -\frac 12}(1-v^2)^{\beta -\frac 12} dudv \nonumber
\end{eqnarray}
valid for two fixed complex numbers $\alpha,\beta$ such that $2\Re e (\alpha) > - 1$ and $2\Re e (\beta) > - 1$.
%	\begin{align}\label{IntRepProdDK1971} 
%	P_n^{(\alpha, \beta)}(1-2t^2) 
%	P_n^{(\alpha, \beta)}(1-2s^2)
%	&= b_{\alpha,\beta}^n
%	\int_{-1}^1 \int_{-1}^1 C_{2n}^{(\alpha+\beta+1)}\left( stu+v\sqrt{(1-t^2)(1-s^2)}\right) 
%	\nonumber
%	\\&  \qquad\qquad \times (1-u^2)^{\alpha -\frac 12}(1-v^2)^{\beta -\frac 12} dudv, 
%	\end{align}
%	valid for two fixed complex numbers $\alpha,\beta$ such that $2\Re e (\alpha) > - 1$ and $2\Re e (\beta) > - 1$,
%	where $b_{\alpha,\beta}(n)$ stands for
%	$$ b_{\alpha,\beta}^n:=
%	\frac{ \Gamma( \alpha+\beta+1) \Gamma(n+\alpha+1)\Gamma(n+\beta+1)}{\pi n!\Gamma( n+\alpha+\beta+1)\Gamma\left( \alpha+\frac 12\right)  \Gamma\left( \beta+\frac 12\right) }
%	%=\frac{P_n^{(\alpha, \beta)}(1) 	P_n^{(\alpha, \beta)}(-1)}{C_{2n}^{(\alpha+\beta+1)}(0) B(1/2,\alpha+1/2)B(1/2,\beta+1/2)} 
%	.$$
The proof of \eqref{IntRepProdDK1971} requires  special geometrical characterization of $P_n^{(\alpha, \beta)}$ (as invariant spherical harmonics under some orthogonal transformations in high dimensions) and the Laplace's integral  representation obtained by Braaksma and Meulenbeld in \cite{BraaksmaMeulenbeld1968}.
%	, to wit
%	\begin{align}\label{IntRepBM1967}
%	P_n^{(\alpha, \beta)}(1-2t^2)
%	&=\frac{(-1)^n 2^{2n}}{\pi (2n)!}
%	\frac{\Gamma(n+\alpha+1)\Gamma(n+\beta+1)}{\Gamma(\alpha+1) \Gamma(\beta+1)}
%	\\& \times \int_{-1}^1 \int_{-1}^1 \left( tu \pm i\sqrt{1-t^2}\right) ^{2n} (1-u^2)^{\alpha -\frac 12}(1-v^2)^{\beta -\frac 12}dudv.\nonumber
%	\end{align}

The proof we propose for \eqref{IntRepJPol} makes use of  Dirichlet--Mehler integral \eqref{IntRepDirichlet--Mehler} for the Legendre polynomials and is contained in the following fundamental and elementary lemmas. In fact, we need only to establish \eqref{IntRepJPol} for nonnegative integers $\alpha=n$ and $\beta =m$. The result for arbitrary complex numbers $\alpha,\beta$ such that $2\Re(\alpha)>-1$ and $2\Re(\beta)>-1$ follows by analytic continuation.

%\begin{lemma}\label{lemFundamentalRessult}
%	For every nonnegative integers $m,n,\ell$ such that $ n \geq m$ we have
%	\begin{align}\label{ii}
%	P_\ell^{(n, m)}(2t^2-1 )= d_{n,m}(\ell)  \int_0 ^1 \left( 1-v^2\right)  ^{m-\frac 12} C_{2\ell}^{(n+m+1)}(vt) dv ,
%	\end{align}
%	where 
%	\begin{align}\label{CSTlEM}
%	d_{n,m}(\ell) =:
%	\frac{2^{2m+1} (\ell+m)!m!(n+m)!}{\pi (2m)!(\ell+n+m)!} .
%	\end{align}
%\end{lemma}

\begin{proof}[Proof of \eqref{ii}]
	We begin by noting that for every real $a$ such that $a\ne1$, we have the identity
	\begin{align}\label{fdleid}
	(1-v^2)^a  \frac{\partial}{4t\partial t} \left(   C_{k}^{(\lambda)}(tv) \right) &=
	-\frac{\lambda}{4(a+1)t^2} \frac{\partial}{\partial v} \left( (1-v^2)^{a+1}
	C_{k-1}^{(\lambda+1)}(tv) \right)\\ & \qquad +
	\frac{\lambda(\lambda+1)}{2(a+1)} (1-v^2)^{a+1} C_{k-2}^{(\lambda+2)}(tv) .\nonumber
	\end{align}
	This is easy to handle by observing that 
	$$\frac{\partial}{\partial t} \left( C_{k}^{(\lambda)}(tv) \right) = \frac{v}{t} \frac{\partial}{\partial v} \left( C_{k}^{(\lambda)}(tv) \right)
	$$ 
	and next using the well-established facts $f'g'= (fg')'-fg"$ and $ \frac{d}{dx}C_{\ell+1}^{(\lambda) } =  2 \lambda C_{\ell}^{(\lambda+1) }$. 
	Therefore, we get 
	\begin{align}\label{fdleid2}
	\int_0^{1} (1-v^2)^a  \frac{\partial}{4t\partial t} \left(   C_{2\ell}^{(\lambda)}(tv) \right) dv = 
	\frac{\lambda(\lambda+1)}{2(a+1)} \int_0^{1} (1-v^2)^{a+1} C_{2\ell-2}^{(\lambda+2)}(tv) dv
	\end{align}
	for $C_{2\ell-1}^{(\lambda)}(0)=0$.
	More generally, an inductive reasoning making use of \eqref{fdleid2} gives rise to 
	$$\int_0^{1} (1-v^2)^a \left(  \frac{\partial}{4t\partial t} \right) ^m \left(   C_{2\ell}^{(\lambda)}(tv) \right) dv = d_{a,\lambda}(n) \int_0^{1} (1-v^2)^{a+m} C_{2\ell-2m}^{(\lambda+2m)}(tv) dv,
	$$ %\end{align}
	for some constant $d_{a,\lambda}(n)$ depending only in $a$, $\lambda$ and $n$.
	Now, by taking $a=-1/2$ and $\lambda=n-m+1$ with $n\geq m$, and using the explicit expression of the $m$-th derivative formula for the Jacobi polynomials \cite[p. 260]{Rainville71}
	$$ \left(  \frac{d}{dx} \right) ^m P_{\ell +m}^{(n,0)}(x)=\frac{(\ell+n+2m)!}{2^m (n+m+\ell)!} P_{\ell}^{(n+m,m)}(x) ,$$
	as well as Lemma \ref{lemIntJacGegn0}, we get 
	\begin{align*}\label{mthder}
	P_{\ell}^{(n,m)}(2t^2-1) 
	&= \frac{2^m (\ell+n)!}{(\ell+n+m)!} \left( \frac{d}{4tdt}\right)^m P_{\ell+m}^{(n-m,0)}(2t^2-1) 
	\\&\stackrel{\eqref{RelJacGegen}}{=} 
	\widetilde{s_{n,m}(\ell)}  \int_0^{1} \left( \frac{\partial}{4t\partial t}\right)^m \left( (1-v^2)^{-1/2}  C_{2(\ell+m)}^{(n-m+1)}(tv) \right)   dv
	\\&= s_{n,m}(\ell)  \int_0^{1} (1-v^2)^{m-1/2} C_{2\ell}^{(n+m+1)}(tv) dv
	\end{align*}
	for every nonnegative integers $n\geq m$. 
	The involved constant $s_{n,m}(\ell) $ is given by 
	$$ s_{n,m}(\ell) := \frac{2(n+m)! (m+\ell)!}{\sqrt{\pi} (n+m+\ell)!  \Gamma\left(m+\frac 12 \right) }$$
	and can be verified by taking $t=0$, keeping in mind the specific values of  
	$$C_{2\ell}^{(\lambda)}(0)=(-1)^{\ell} \frac{\Gamma(\lambda+\ell)}{\ell!\Gamma(\lambda)},$$
	$$\int_0^1(1-v^2)^{\alpha-1/2} dv =\frac{\sqrt{\pi} \Gamma\left( \alpha +\frac 12\right) }{\Gamma(\alpha+1)},$$
	and 
	$$ P_{\ell}^{(\alpha,\beta)}(-1)=(-1)^\ell \frac{\Gamma(\beta+\ell+1)}{\ell!\Gamma(\beta+1)}.$$
	This proves \eqref{ii}.
\end{proof}

\begin{remark}
	Using the symmetry relation \cite[Eq. (2.13), p. 8]{Askey1975}
	$$P_\ell^{(\alpha,\beta)}(-x) =(-1)^nP_\ell^{(\beta,\alpha)}(x),$$
	we recover \eqref{IntRepJPol}.
\end{remark}

\begin{remark}
	One recovers Mehler's form of Dirichlet's integral \eqref{IntRepDirichlet--Mehler} for 
	Legendre polynomials
	by taking $ \alpha =\beta=0$ in \eqref{IntRepJPol} and making  specific change of variables.
\end{remark}

 \end{document}